% Willie Harrison
% Globecom workshop paper for 2013.

%\documentclass[pdftex,12pt,a4paper]{report}
\documentclass[10pt,conference]{IEEEtran}
\newlength{\pic}
\setlength{\pic}{3.3in}
\usepackage{graphicx}
\graphicspath{ {images/} }
\usepackage{amsthm,amssymb,amsmath}
\usepackage{pgf,pgfarrows,pgfnodes}
\usepackage{tikz}
\usepackage{acronym}
\usepackage{algorithm, algorithmic}
\usepackage{lpic}
\usepackage{siunitx}
\usepackage{soul}
\usetikzlibrary{arrows,decorations.pathmorphing,backgrounds,positioning,fit,calc} %\newcommand{\HRule}{\rule{\linewidth}{0.5mm}}
\newtheorem{lemma}{Lemma}
\newtheorem{theorem}{Theorem}

\theoremstyle{definition}

\newtheorem{example}{Example}
\theoremstyle{remark}

\theoremstyle{plain}

\acrodef{LDPC}[LDPC]{low-density parity-check}
\acrodef{BEC}[BEC]{binary erasure channel}
\acrodef{BEWC}[BEWC]{binary erasure wiretap channel}
\acrodef{IoT}[IoT]{internet of things}
\acrodef{BER}[BER]{bit error rate}

\begin{document}

%\title{Analyzing Coset Coding Techniques With Finite Blocklengths}

\title{Quantifying Equivocation for Finite Blocklength Wiretap Codes}

\author{\IEEEauthorblockN{Jack Pfister\IEEEauthorrefmark{1}, Marco A. C. Gomes\IEEEauthorrefmark{3}
Jo\~{a}o P. Vilela\IEEEauthorrefmark{4}, Matthieu R. Bloch\IEEEauthorrefmark{2}, and
Willie K. Harrison\IEEEauthorrefmark{1} \\
}

\IEEEauthorblockA{\IEEEauthorrefmark{1}Department of Electrical and Computer Engineering \\
University of Colorado Colorado Springs, Colorado Springs, CO 80923 \\ Email: jpfister@uccs.edu, wharriso@uccs.edu \\} 
\IEEEauthorblockA{\IEEEauthorrefmark{3}Instituto de Telecomunica\c{c}\~{o}es, Department of Electrical and Computer Engineering\\ University of Coimbra, Coimbra, Portugal \\ Email: marco@co.it.pt \\}
\IEEEauthorblockA{\IEEEauthorrefmark{4}CISUC and Department of Informatics Engineering, University of Coimbra, Coimbra, Portugal\\ Email: jpvilela@dei.uc.pt}
 \IEEEauthorblockA{\IEEEauthorrefmark{2}School of Electrical and Computer Engineering, Georgia of Institute of Technology, Atlanta, GA 30332\\ Email: matthieu.bloch@ece.gatech.edu}
}

%\author{\IEEEauthorblockN{Jack Pfister, Marco Gomes, Joao Vilela, Willie Harrison}
%
%\IEEEauthorblockA{Department of Electrical and Computer Engineering \\
%University of Colorado Colorado Springs \\
%Email: jpfister@uccs.edu}
%}

\maketitle

\renewcommand{\thefootnote}{}
\footnotetext{This work was partially funded by the following entities and projects: the US National Science Foundation (Grant Award Number 1460085), project PTDC/EEI-TEL/3684/2014 (SWING2 - Securing Wireless Networks with Coding and Jamming) co-funded by COMPETE 2020 and Portugal 2020 - Programa Operacional Competitividade e Internacionaliza\c{c}\~{a}o (POCI), European Union through Fundo Europeu de Desenvolvimento Regional (FEDER) and Funda\c{c}\~{a}o para a Ci\^{e}ncia e Tecnologia (FCT), the FLAD project INCISE (Interference and Coding for Secrecy), and by Portuguese FCT under project UID/EEA/50008/2013. % the Luso-American Foundation (FLAD), 
%the Funda\c{c}\~{a}o para a Ci\^{e}ncia e Tecnologia (FCT Scholarship SFRH/BD/60831/2009), and the WITS project (Grant PTDC/EIA/71362/2006).
}
\renewcommand{\thefootnote}{\arabic{footnote}}

\begin{abstract}

This paper presents a new technique for providing the analysis and comparison of wiretap codes in the small blocklength regime over the binary erasure wiretap channel. A major result is the development of Monte Carlo strategies for quantifying a code's equivocation, which mirrors techniques used to analyze normal error correcting codes. For this paper, we limit our analysis to coset-based wiretap codes, and make several comparisons of different code families at small and medium blocklengths. Our results indicate that there are security advantages to using specific codes when using small to medium blocklengths.
\end{abstract}

%\begin{IEEEkeywords}
%Equivocation, finite blocklength regime, physical-layer security, secrecy coding.
%\end{IEEEkeywords}

\section{Introduction}
\label{sec:intro} 

Due to the increased number of automated and wireless devices in use today, it appears that the \ac{IoT} is slowly, but surely, becoming a reality. With the increased flexibility and convenience that the \ac{IoT} promises to bring about, come also a plethora of security and privacy issues. For one, the \ac{IoT} will be comprised of power-constrained devices; for two, these devices will likely need only short packets to communicate a large proportion of transmitted data; and for three, communications will need to have low latency to cope with small memory sizes on smaller connected devices~\cite{Durisi2016}. The architectures currently deployed in communication systems are unsuited for this new environment, as they typically rely on large blocklength coding schemes for reliability, including interleaving techniques that bring about added latency, and power-hungry and complicated algorithms for secret key exchange and/or cryptography. Thus, there is a current need for low-power secrecy algorithms that can make security guarantees over short blocklengths. 

One technique that may prove itself to be a nice match for many security and privacy issues in the \ac{IoT} is that of wiretap (or secrecy) coding~\cite{Harrison2013,Bloch2015} for physical-layer security~\cite{Wyner1975,BlochBook}. %There has been much interest of late in securing transmitted data by exploiting known parameters and characteristics of the physical layer of communications channels. One means to bring about this physical-layer security
%~\cite{Wyner1975,BlochBook}. A large branch o is the keyless coding of information, or wiretap coding~\cite{Harrison2013, Bloch2015}. 
The general idea of such techniques is to code data in such a way that the channel over which an eavesdropper observes communciations naturally secures the data transmission, while also allowing reliability over other communications channels for legitimate receivers. 

If coding for secrecy is to prove itself adequate for solving the security issues inherent in the \ac{IoT}, it must be better understood how these codes perform in the finite blocklength regime, particularly with very short blocklengths. Traditionally, wiretap codes are evaluated and analyzed as blocklengths approach infinity using information theoretic security measures. %recently there has been some push to analyze wiretap codes at finite blocklengths. Unfortunately, there are very few techniques for properly quantifying the amount of information theoretic security attainable by specific codes at set blocklengths. 
%Communication systems typically rely on encryption to protect transmitted data from an eavesdropper. To obtain the original information from encrypted data, the eavesdropper must use computationally expensive algorithms that require large amounts of processing time. Encryption is an example of computational security, where some level of secrecy is achieved due to the amount of time and resources required to obtain the original information. Computational security works well for today's communications systems; however, no one has proven that efficient attacks against systems using computational security do not exist. As a result, researchers are turning their attention towards methods that make use of information-theoretic security ~\cite{BlochBook}. Information-theoretic security measures quantify the amount of information leaked to an attacker with infinite time and computing power. 
Let a message $M$ be encoded into a length $n$ codeword $X^n$ for transmission across a communications channel. The eavesdropper observes a possibly noisy version of $X^n$ denoted by $Z^n$.  Data are transmitted with \emph{weak secrecy}~\cite{Wyner1975} if the leakage rate of information about the message goes to zero in the limit; that is,
\begin{equation}
\label{eq:weakSecrecy}
 \frac{1}{n}\mathbb{I}(M;Z^n) \rightarrow 0 \text{ as } n \rightarrow \infty.
 \end{equation}
Data are communicated with \emph{strong secrecy}~\cite{Maurer1994,Maurer2000} if the total amount of leaked information about the original message approaches zero as blocklength approaches infinity, or equivalently
\begin{equation}
\label{eq:strongSecrecy}
 \mathbb{I}(M;Z^n) \rightarrow 0 \text{ as } n \rightarrow \infty.
\end{equation}
While a majority of secrecy coding structures~(e.g., \cite{Harrison2013,Thangaraj2007,Subramanian2010}) make use of these measures to classify their security achievements, we argue that a new approach in the finite blocklength regime, beginning with extremely short blocklength codes, would be of great value. Furthermore, we wish to actually quantify the total equivocation as a function of channel parameters in the eavesdropper's channel, rather than only analyzing codes in the asymptotic blocklength regime.  %Coset coding techniques have achieved both weak and strong secrecy~\cite{Subramanian2010}~\cite{Thangaraj2007}~\cite{Harrison2013}. These discoveries are significant, but their implications are unclear for finite blocklength codes. This paper analyzes coset coding techniques using finite blocklengths that actually could be implemented in a practical communications system. \par
In this paper, we analyze coset-based secrecy codes (as originally presented in~\cite{Wyner1975,Thangaraj2007}) over finite blocklengths to quantify exactly (where possible) or estimate (using Monte Carlo techniques) the precise amount of information-theoretic security in terms of the equivocation
\begin{equation}
 \Delta = \mathbb{H}(M|Z).
\end{equation}
In essence, we are proposing that finite blocklength secrecy codes can be analyzed individually using simulation techniques similar to those that create \ac{BER} curves in generic error-control codes. In other words, when possible, we can give the full equivocation, or bound it as appropriate; but when these techniques fail, or when more precise security measures are required at specific blocklengths, we can simply estimate the equivocation using Monte Carlo simulations.
%present a way to empirically calculate the amount of security achieved using a coset coding technique~\cite{Wyner1975,Thangaraj2007}. We also present a method to characterize the performance of a code in a coset coding technique. Finally, we compare the performance of different codes and come up with recommendations for practical communication systems. \par

The remainder of this paper is organized as follows. Section \ref{sec:background} contains background information about the channel model used throughout the paper, coset coding in general, and a specific encoding and decoding algorithm. Section \ref{sec:security} demonstrates how to quantify equivocation when using coset coding techniques over binary erasure channels. Section \ref{sec:MonteCarlo} introduces a new parameter to compare finite-length codes with the achievable secrecy limits under the infinite blocklength assumption. Finally, Sections \ref{sec:characterizing} and \ref{sec:conclusion} present empirical results for different coset coding techniques and summarize the major findings of the paper, respectively.

\section{Background}
\label{sec:background}
In this section, we discuss the channel model used for this paper, as well as existing techniques for wiretap coding over the \ac{BEWC}.

\subsection{Overview of Channel Model}
The channel model assumed in this paper is a variant of Wyner's wiretap model~\cite{Wyner1975} called the \ac{BEWC}, which is depicted in Fig.~\ref{fig:wiretap}.
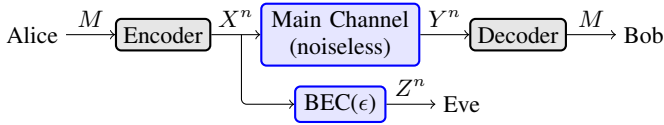
\begin{figure}
\small
\centering
  \begin{tikzpicture}
    % Styles
    [node distance=0.65cm, rounded corners=2pt, channel/.style={rectangle,draw=blue,fill=blue!10,thick,
    text centered,minimum size=4mm},
    boxedNode/.style={rectangle,draw,fill=black!10,thick,text centered, minimum size=4mm},
    inner sep=1mm]
    % Node Definitions
    \node (Alice) {Alice};
    \node [boxedNode] (Encoder) [right=of Alice] {Encoder};
    \node [channel, text width=19mm] (Qm) [right =of Encoder] {Main Channel \\ (noiseless)};
    \node [boxedNode] (Decoder) [right=of Qm] {Decoder};
    \node (Bob) [right=of Decoder] {Bob};
    \begin{scope}[node distance=0.25cm]
      \node [channel, text width=10mm] (Qw) [below =of Qm] {BEC($\epsilon$)};
    \end{scope}
    %\node [boxedNode] (Decoder_w) [right=of Qw] {Decoder};
    \node (Eve) [right=of Qw] {Eve};
    % Start drawing lines
    \draw[->] (Alice) -- node [above] {$M$} (Encoder);
    \draw[->] (Encoder) -- node [above] {$X^n$} (Qm);
    \draw[->] (Qm) -- node [above] {$Y^n$} (Decoder);
    \draw[->] (Decoder) -- node [above] {$M$} (Bob);
    \draw[->] (Qw) to node [above] {$Z^n$} (Eve);
    %\draw[->] (Decoder_w) to node [above] {$\hat{M}$} (Eve);
    \draw[->] ($(Encoder.east) + (4mm,0)$) |- (node cs:name=Qw,anchor=west);
    %\draw [->] (Bob) -- ++(0,1.1)
    %    -| node[near start, above] {Feedback Channel} (Encoder);
  \end{tikzpicture}
%\end{center}
\caption{The \ac{BEWC} model has a noiseless main channel between Alice and Bob, and a \ac{BEC} eavesdropper's channel.}\label{fig:wiretap}
\end{figure}
In this model, Alice wishes to securely transmit a binary message to Bob in the presence of Eve, an eavesdropper who has full knowledge of the coding scheme in use. Alice encodes a message $M$ from the alphabet $\mathcal{M} = \{1,2,\ldots,2^k\}$ into a corresponding $n$-bit codeword $X^n$ ($n \geq k$). Alice transmits $X^n$ to Bob through the main channel of communication, and Bob receives  $Y^n$ at the output of the channel. From this observation, Bob decodes and forms his estimate of the original message, denoted $M'$. For the purposes of this paper, the main channel is assumed to be noiseless, and thus, $Y^n = X^n$ and $M' = M$. An eavesdropper named Eve observes $Z^n$ through the eavesdropper's channel, which is a \ac{BEC} with parameter $\epsilon$.  Each bit of $X^n$ is erased by the channel with probability $\epsilon$ independent of all other bits, and erasures are denoted as `$?$' symbols. 

\subsection{Overview of Coset Coding}
In general, data should be encoded to minimize the probability of error for Bob and to restrict the amount of information intercepted by Eve. Since the main channel is noiseless, however, we need not worry about Bob. Secrecy over the eavesdropper's channel can be achieved through the coset coding procedure described in~\cite{Wyner1975,Ozarow1984}. Let the messages, $M\in\mathcal{M}$, be chosen uniformly at random. An ($n,n-k$) linear block code $C_1$ (also referred to as the base code) is chosen that contains $2^{n-k}$ $n$-bit codewords~\cite{MoonCoding}. From $C_1$, $2^k$ cosets ($C_1,C_2, C_3,\ldots,C_{2^k}$) can be obtained. These cosets can be formed by choosing an $n$-bit vector $A^n  \notin C_1$ and adding it to each codeword in $C_1$ using addition over GF(2). This process is repeated (ensuring now that $A^n$ is also not a codeword in another coset) until all $2^k$ cosets are obtained. By Lagrange's Theorem of cosets~\cite{MoonCoding}, each coset will contain $2^{n-k}$ $n$-bit binary vectors. Each message is then assigned to a unique coset, forming a codebook that contains every binary vector in the $n$-bit space. To encode $M$, a codeword is chosen at random from its corresponding coset and is transmitted as $X^n$. Since the main channel is noiseless, Bob simply has to find $Y^n$ in the codebook and map back to $M$. Eve also has access to the codebook and can obtain $M$ provided $Z^n$ allows her to rule out all but one coset. If $Z^n$ contains erasures, it is possible to achieve a measure of security further explored in Section \ref{sec:security}. The following example depicts the encoding process and the possible security benefits of coset coding.

\begin{example}
\label{ex:example1}
Let $k = 2$, and let the elements in $\mathcal{M} = \{1, 2, 3, 4\}$ be equally likely. We choose the base code $C_1$ to be the (4, 2) linear, block code containing the codewords $\{0000, 0110, 1001, 1111\}$. Cosets are formed, and each message is arbitrarily mapped to a corresponding coset resulting in the codebook seen in Table \ref{tab:table1}.
\begin{table}
\caption{Codebook structure outlined in example 1.}
\begin{center}
\begin{tabular}{|c|cccc|} 
\hline
$M$ & \multicolumn{4}{c|}{Codewords} \\
\hline
1 & 0000 & 0110 & 1001 & 1111 \\
\hline
2 & 0001 & 0111 & 1000 & 1110 \\
\hline
3 & 0010 & 0100 & 1011 & 1101 \\
\hline
4 & 0011 & 0101 & 1010 &  1100\\
\hline 
\end{tabular}
\end{center}
\label{tab:table1}
\end{table}
Suppose we wish to transmit $m$ = 3. A codeword from the third coset is chosen at random, for example 1011, and is transmitted as $x^n$. Since the main channel is noiseless, Bob receives $y^n = x^n$ and he can map $y^n = 1011$ back to the message $m = 3$. Suppose that Eve observes $z^n = 10??$. Since each coset contains a codeword consistent with $z^n$ Eve cannot rule out any cosets, and hence, $\mathbb{H}(M|Z^n = z^n) = 2$ bits, and Eve gains no information from the observation.
\end{example}

\subsection{Practical Encoding and Decoding Algorithm}
A computationally efficient method for encoding and decoding data in the coset coding scheme was developed in~\cite{Thangaraj2007}. The message $M$ is mapped to $k$-bits and is now denoted as $M^k\in\{0,1\}^k$. We first select an ($n,n-k$) linear block code for $C_1$ with generator matrix $G$ and parity check matrix $H$. The rows of $H$ are denoted $h_1,h_2,\ldots,h_k$. We now create $k$ linearly independent $n$-bit vectors ($q_1,q_2,\ldots,q_k$) that satisfy the following conditions
\begin{equation}
\label{eq:algorithm1}
q_i \notin C_1\text{ for all }0 \leq i \leq k,
\end{equation}
\begin{equation}
\label{eq:algorithm2}
q_ih_i^T = 1\text{ for all }0 \leq i \leq k,
\end{equation}
\begin{equation}
\label{eq:algorithm3}
q_ih_j^T = 0 \text{ for } 0 \leq i,j \leq k, i \neq j.
\end{equation}
The last two requirements ensure that the syndrome equals the message. We will now create a matrix, $G'$, whose rows are $q_1,q_2,\ldots,q_k$. We also generate a random $(n-k)$ bit vector $v^{(n-k)}$ for each transmission. The encoding procedure is a simple matrix multiplication and is represented as
\begin{equation}
x^n = 
\begin{bmatrix}
m^k & v^{n-k} 
\end{bmatrix}
\begin{bmatrix}
G' \\
G
\end{bmatrix}.
\end{equation}
Using this encoding procedure, $m^k$ determines the coset while the particular codeword within the coset is determined by $v^{n-k}$. If it is assumed that $y^n$ is received erasure-free, the receiver calculates the syndrome to obtain
\begin{align}
\label{eq:syndromeCalculation}
s^k & = y^nH^T \\
 & = x^nH^T \\ 
 & = \left[\begin{matrix} m^k & v^{n-k} \end{matrix}\right] \left[\begin{matrix} G' \\ G \end{matrix}\right]H^T \\
 & = m^k(G'H^T) + v^{n-k}(GH^T) \\
 & = m^k,
\end{align}
because (\ref{eq:algorithm2}) and (\ref{eq:algorithm3}) ensure that $G'H^T = I_k$ and $GH^T = 0$ by definition, where $I_k$ is the $k\times k$ identity matrix. The authors of~\cite{Thangaraj2007} further reduce the complications required by the encoder resulting in very efficient algorithms.

\section{Equivocation Calculation over the \ac{BEC}}
\label{sec:security}
%\subsection{Quantifying Security}
Let us now calculate $\mathbb{H}(M|Z)$ for an eavesdropper  in our system. Note that~\cite{CoverAndThomas}
%
%Using the coset coding scheme, a level of security can be obtained if the eavesdropper has an erasure channel. Eve's average level of confusion can be quantified by the \emph{equivocation}, denoted $H(M|Z)$. Equivocation can be calculated in any coding scheme as
\begin{equation}
\label{eq:equivocation}
\begin{split}
\mathbb{H}(M|Z^n) &= \sum_{z^n \in \mathcal{Z}^n} p(z^n)\mathbb{H}(M|Z^n=z^n),\\
		&= \mathbb{E}\left[\mathbb{H}(M|Z^n=z_n)\right],
\end{split}
\end{equation}
\noindent
where $\mathcal{Z}^n = \{0, 1, ?\}$.

\par The expression $\mathbb{H}(M|Z^n=z^n)$  measures Eve's level of uncertainty regarding the message conditioned upon a particular observation $z_n$ from the eavesdropper's channel and is measured in units of bits. Our goal is to maximize Eve's equivocation using coset coding. The following theorem quantifies Eve's equivocation for a specific observation $z^n$ given the number of erasures, the placement of erasures, and the generator matrix of $C_1$. 

\begin{theorem}
Assume $M^k$ is chosen uniformly at random from $\{0,1\}^k$. Let the $(n,n-k)$ linear, block code $C_1$ be the base code to be used in the coset coding scheme. Let $G$, a binary $(n-k) \times n$ matrix, be the generator matrix for $C_1$. Consider an instance of an eavesdropper's observation $z^n\in\{0,1,?\}^n$. Let $\mu$ represent the number of unerasured positions in observation $z^n$ and let $G_\mu$ be a binary matrix with dimensions $(n-k) \times \mu$ whose columns correspond to the unerasured column indicies of $G$. Then,
\begin{equation}
\label{eq:proof1}
\mathbb{H}(M^k|Z^n=z^n) = k - \mu + \mathrm{rank}(G_\mu).
\end{equation}
\label{thm:theorem1}
\end{theorem}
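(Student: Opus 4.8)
The plan is to reduce the entropy computation to a counting problem over affine subspaces and then evaluate it with rank--nullity. First I would observe that because the message $M^k$ is uniform and, within its assigned coset, the transmitted codeword $X^n$ is chosen uniformly (the vector $v^{n-k}$ is uniform and the stacked encoding matrix is invertible), the codeword $X^n$ is uniform over all of $\{0,1\}^n$. Feeding this through the BEC, the posterior $p(x^n \mid Z^n = z^n)$ is supported on exactly those $x^n$ that agree with $z^n$ in the $\mu$ unerased positions, and on that support it is uniform, since every consistent $x^n$ yields the same likelihood $\epsilon^{n-\mu}(1-\epsilon)^{\mu}$. Writing $S$ for the set of unerased coordinates and $W = \{w \in \{0,1\}^n : w_S = 0\}$ for the coordinate subspace of the erased positions, this support is an affine set $V = x_0 + W$ of size $|V| = 2^{n-\mu}$. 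Since the message is a deterministic function of $X^n$ (it indexes the coset of $C_1$ containing $X^n$), the goal becomes to show that $\mathbb{H}(M^k \mid Z^n = z^n) = \log_2 N$, where $N$ is the number of cosets of $C_1$ that meet $V$.

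The step I expect to be the crux is justifying that the conditional distribution of $M^k$ is \emph{uniform} over these $N$ consistent cosets, rather than merely supported on them. Each coset is an affine translate $a + C_1$, and $V$ is an affine translate of $W$; when $(a+C_1)\cap V$ is nonempty it is a coset of $C_1 \cap W$, so every nonempty intersection has the same cardinality $|C_1 \cap W|$. Consequently the $2^{n-\mu}$ equally likely points of $V$ are partitioned evenly among the $N$ cosets that meet $V$, each receiving conditional probability $1/N$, which gives $\mathbb{H}(M^k \mid Z^n = z^n) = \log_2 N$ exactly. This translation-invariance of the intersection sizes is what makes the argument work, and is the one place where the linearity of the base code is essential.

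It then remains to count. I would write $N = |V| / |C_1 \cap W| = 2^{(n-\mu) - \dim(C_1 \cap W)}$ and evaluate $\dim(C_1 \cap W)$. A codeword $uG \in C_1$ lies in $W$ exactly when its unerased coordinates vanish, i.e.\ when $u G_\mu = 0$; since $u \mapsto uG$ is injective, $C_1 \cap W$ is isomorphic to the null space of $u \mapsto u G_\mu$, whose dimension is $(n-k) - \mathrm{rank}(G_\mu)$ by rank--nullity. Substituting gives $N = 2^{(n-\mu) - (n-k) + \mathrm{rank}(G_\mu)} = 2^{k - \mu + \mathrm{rank}(G_\mu)}$, and taking $\log_2$ yields the claimed identity $\mathbb{H}(M^k \mid Z^n = z^n) = k - \mu + \mathrm{rank}(G_\mu)$.

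As a sanity check I would confirm the formula against Example~\ref{ex:example1}, where $z^n = 10??$ gives $\mu = 2$ and $G_\mu$ of rank $2$, hence $2 - 2 + 2 = 2$ bits, matching the claimed full equivocation. An alternative route, which I would keep in reserve, uses that the message equals the syndrome $M^k = X^n H^T$: the image of the uniform distribution on $V$ under $x \mapsto x H^T$ is uniform on an affine set of dimension $\mathrm{rank}(H_{\bar S})$, where $H_{\bar S}$ collects the columns of $H$ at the erased positions, and a standard generator/parity-check duality identity $\mathrm{rank}(H_{\bar S}) = k - \mu + \mathrm{rank}(G_\mu)$ recovers the same answer.
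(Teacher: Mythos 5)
Your proof is correct and follows essentially the same route as the paper's: count the cosets consistent with $z^n$, argue that the posterior over them is uniform, and take the logarithm, with the count $2^{k-\mu+\mathrm{rank}(G_\mu)}$ coming from the same rank computation (your kernel/rank--nullity calculation is just the dual of the paper's ``$2^r$ ways to fill in the revealed positions'' image-size argument). You are in fact more careful at the one delicate point---the paper simply asserts that the consistent cosets are equally likely, whereas you justify it by showing that every nonempty intersection $(a+C_1)\cap V$ is a coset of $C_1\cap W$ and hence that all intersections have equal size.
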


\begin{proof}
If $G_\mu$ has rank $r$, then there exist $2^r$ ways to fill in the revealed positions within the codewords of $C_1$. Due to the properties of cosets, there are also $2^r$ ways to fill in the the revealed positions within the codewords of any and all solitary cosets. With this in mind, there exist $2^{n-k}/2^r = 2^{n-k-r}$ possible codewords in each possible coset. There must exist $2^{n-\mu}$ total codewords consistent with $z^n$, therefore, $2^{n-\mu} / (2^{n-k-r})$ cosets are consistent with $z^n$. Since all cosets are equally likely,
\begin{align}
\label{eq:proof2}
\mathbb{H}(M^k|Z^n=z^n) & =\log_2(2^{k-\mu+r}) \nonumber \\ & = k - \mu + \mathrm{rank}(G_\mu).
\end{align}

\end{proof}
It should be noted that this result is stronger than that given in Theorem 2 of~\cite{Thangaraj2007}, which was derived from results in \cite{Ozarow1984}, and a similar observation was made in~\cite{Wick2013}.

\begin{example}
\label{ex:example2}
The base code used in Example \ref{ex:example1} has the following generator matrix 
$$G = \begin{bmatrix} 1 & 0 & 0 & 1 \\ 0 & 1 & 1 & 0 \end{bmatrix}.$$
Let us assume that the eavesdropper observes $z_1^n = w??w$, where $w \in \{0,1\}$ and its actual value is irrelevant. Using Theorem \ref{thm:theorem1}, $\mathbb{H}(M^k|Z^n=z_1^n)$ = 1 bit. However, if the eavesdropper observes  $z_2^n = ww??$,  $\mathbb{H}(M^k|Z^n=z_2^n)$ = 2 bits. Notice that the codewords consistent with $z_1^n$ in Table~\ref{tab:table1} are contained in only two of the four cosets, leaking one bit of information, while the codewords consistent with $z_2^n$ are spread amongst all four cosets, leaking zero bits of information.
\end{example}

\section{Monte Carlo Channel Simulation Technique}
\label{sec:MonteCarlo}
It is true that the choice of $C_1$ plays an important role in the equivocation of Eve. Although for small codes $\mathbb{H}(M|Z^n)$ may be calculated exactly by cycling through all possible $z^n\in\mathcal{Z}^n$ in (\ref{eq:equivocation}) and using Theorem~\ref{thm:theorem1}, this becomes computationally infeasable as blocklength grows to even moderate lengths. To estimate the security performance of any base code in the coset coding scheme, a Monte Carlo simulation can be performed.

\subsection{Methodology}
Let $G$ be the binary generator matrix with dimensions $(n-k) \times n$ for $C_1$. Recall that the eavesdropper's channel has probability of erasure $\epsilon$. The equivocation of a particular observation can be calculated using Theorem~\ref{thm:theorem1}. This process is repeated for a predetermined number of iterations, resulting in an estimate of the average equivocation. 
\begin{lemma}
The expected value of 
\begin{equation}
\label{eq:averageEquivocation}
\hat{H} = \frac{1}{N}\sum\limits_{i=1}^{N}H(M|Z=z_i),
\end{equation}
where $N$ is the number of iterations in a Monte Carlo simulation, is the true equivocation. Therefore, $\hat{H}$ is an unbiased estimator of $\Delta = H(M|Z^n)$.
\end{lemma}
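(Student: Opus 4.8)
The plan is to recognize that $\hat{H}$ in (\ref{eq:averageEquivocation}) is an empirical sample mean of the per-observation equivocation, so that its unbiasedness follows from linearity of expectation together with the fact that each simulated observation is drawn from the true output distribution $p(z^n)$ of the eavesdropper's channel. Concretely, I would write $\mathbb{E}[\hat{H}] = \frac{1}{N}\sum_{i=1}^{N}\mathbb{E}\big[H(M\mid Z=z_i)\big]$ and then argue that every summand equals the full equivocation $H(M\mid Z^n)$ given in (\ref{eq:equivocation}).

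First I would make precise how each sample $z_i$ is generated in the simulation: a message $m$ is drawn uniformly from $\{0,1\}^k$, a codeword is selected uniformly at random from the corresponding coset (equivalently, a uniform $v^{n-k}$ is chosen and $x^n$ is formed by the encoder of Section~\ref{sec:background}), and $x^n$ is passed through the $\mathrm{BEC}(\epsilon)$ to produce $z_i$. I would then verify that this procedure induces exactly the marginal law $p(z^n)$ on the observation, so that the $z_i$ are independent and identically distributed with $z_i$ having law $p(z^n)$. Given this, for each fixed $i$ the quantity $H(M\mid Z=z_i)$ is a deterministic function of $z_i$ — computable via Theorem~\ref{thm:theorem1} from the erasure pattern and $\mathrm{rank}(G_\mu)$ — and its expectation over the randomness of $z_i$ is $\sum_{z^n\in\mathcal{Z}^n} p(z^n)\,H(M\mid Z^n=z^n)$.

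Second I would invoke (\ref{eq:equivocation}) to identify this sum with $H(M\mid Z^n)=\Delta$. Since this holds identically for every $i$, linearity of expectation gives $\mathbb{E}[\hat{H}] = \frac{1}{N}\cdot N\cdot \Delta = \Delta$, which establishes unbiasedness.

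The main obstacle is the sampling-distribution step: one must confirm that the encoder's two layers of randomness (uniform message and uniform intra-coset codeword) followed by the memoryless erasure channel reproduce the true joint law of $(M,Z^n)$, and in particular the marginal $p(z^n)$ that weights (\ref{eq:equivocation}). The remaining manipulations — pulling the sum out by linearity and substituting into (\ref{eq:equivocation}) — are routine once the sampling model is pinned down.
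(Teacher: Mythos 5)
Your proposal is correct and follows essentially the same route as the paper: linearity of expectation followed by identifying $\mathbb{E}\left[H(M|Z=z_i)\right]$ with the sum in (\ref{eq:equivocation}). The only difference is that you explicitly justify the step the paper leaves implicit --- that the simulated $z_i$ are i.i.d.\ with the true marginal $p(z^n)$ --- which is a worthwhile addition but not a different argument.
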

\begin{proof}
The expected value of $\hat{H}$ is
\begin{equation}
\label{eq:expectedValue}
\begin{split}
\mathbb{E}[\hat{H}] &=\mathbb{E}\left[\frac{1}{N}\sum\limits_{i=1}^{N}H(M|Z=z_i)\right]\\ 
		    &= \frac{1}{N}\sum\limits_{i=1}^{N}\mathbb{E}\left[H(M|Z=z_i)\right]\\
		    &=H(M|Z),
\end{split}
\end{equation}
where the final line in the proof comes from (\ref{eq:equivocation}).
\end{proof}
Using the estimator $\hat{H}$, the security performance of any coset code can be thoroughly characterized by simulating across a range of $\epsilon$ values. Similar types of Monte Carlo simulations have been used to characterize bit error rates (BER) of forward error correcting codes~\cite{MoonCoding,LinAndCostello}, so it should not surprise us that simulation can be used to evaluate wiretap codes.

\subsection{Achievability Gap}
Let $C_1$ be the $(n,n-k)$ linear, block code that will be used as the base code in the coset coding scheme as before. In the worst case scenario where $z^n$ contains zero erasures, all the information is leaked to the eavesdropper, and $\mathbb{H}(M|Z^n) = 0$. In the best case scenario, $z^n$ contains sufficient erasures such that $\mathbb{H}(M|Z^n) = k = \mathbb{H}(M)$. It now makes sense to present equivocation on a normalized scale, and we note that 
\begin{equation}
 \frac{\Delta}{n} = \frac{\mathbb{H}(M|Z^n)}{n}
\end{equation}
is usually called the equivocation rate. %, the eavesdropper observes $k$ bits of equivocation. In this section we will refer to the equivocation rate $\frac{1}{n}H(M^k|Z^n)$ which is the equivocation of an observation by the eavesdropper divided by $n$ of $C_1$. 
Notice that this quantity can be bounded as
\begin{equation}
\label{eq:twoBoundEquivocation}
0 \leq \frac{\mathbb{H}(M|Z^n)}{n} \leq \frac{\mathbb{H}(M)}{n} = R,
\end{equation}
where $R = k/n$ is called the secret information rate, using the standard inequality rule of conditional entropy~\cite{CoverAndThomas}. Further note that the secrecy capacity $C_s$, defined as the supremum of rates such that weak or strong secrecy can be achieved while also maintaining reliable communications over the main channel, is equal to $\epsilon$ for the \ac{BEWC}~\cite{Thangaraj2007}. Thus, it is also true that
\begin{equation}
  \label{eq:epsilonBd}
  \frac{\mathbb{H}(M|Z^n)}{n} \leq \epsilon.
\end{equation}
Combining (\ref{eq:twoBoundEquivocation}) and (\ref{eq:epsilonBd}) results in the overall bound of
\begin{equation}
\label{eq:combinedBd}
  0 \leq \frac{\mathbb{H}(M|Z^n)}{n} \leq \min(\epsilon, R),
\end{equation}
which is depicted in Fig.~\ref{fig:Ag}.
%and $\hat{H}/n$ is its unbiased estimator.
%\par For a BEC with probability of erasure $\epsilon$, the secrecy capacity of the channel is $C_s = \epsilon~\cite{Thangaraj2007}$. The secret information rate using coset coding is defined as $R = k/n$~\cite{Thangaraj2007}~\cite{Wyner1975}. It is impossible to hide more information than can be stored in the erased positions, or equivalently, the number of hidden message bits must be less than or equal to the number of erased bits~\cite{Wyner1975}. The maximum possible equivocation is the original entropy of $M$. Assuming $M$ is uniformly random, the maximum possible eqivocation is $R$. This creates even tighter bounds for the possible equivocations and can be visualized with the solid line in Fig. \ref{fig:Ag}. It is impossible for $C_s$ to approach $R$, even in the limit as $n \rightarrow \infty$ if $\epsilon < R$. For finite values of $n$, 
Since for finite values of $n$, 
\begin{equation}
\label{eq:upperBoundEquivocation}
\frac{\mathbb{H}(M|Z^n)}{n} \Big|_{\epsilon=R} < R,
\end{equation}
but ideal secrecy codes can certainly do no better than $R$ in the limit when $\epsilon = R$, we can effectively judge how closely a finite blocklength code gets to approaching the asymptotic secrecy supremum by considering the gap between $\mathbb{H}(M|Z^n)$ and $R$ at $\epsilon = R$. Thus,
we now define the \emph{achievability gap}, $A_g$, as
\begin{equation}
\label{eq:achievabilityGap}
A_g = R - \frac{\mathbb{H}(M|Z^n)}{n} \Big|_{\epsilon=R}.
\end{equation}
Using Monte Carlo techniques, individual choices of $C_1$ in a coset coding scheme can now be compared side by side using their entire equivocation rate curves, or using a single metric $A_g$. Both of these are depicted in Fig.~\ref{fig:Ag}. As $A_g$ gets smaller, the equivocation rate curve also approaches the bound in (\ref{eq:combinedBd}), which is best possible, even for infinite length codes.
%The achievability gap measures the difference between the code's maximum possible equivocation under the infinite blocklength assumption and the finite-blocklength code's true equivocation evaluated at $\epsilon = R$.  It will be shown that 
Therefore, good secrecy codes and codes with larger blocklengths will tend to have smaller $A_g$ values. The achievability gap is significant because it is the largest difference between the equivocation rate bound in (\ref{eq:combinedBd}) and a code's true equivocation rate. A code's equivocation rate is always a concave function of $\epsilon$~\cite{CoverAndThomas}. For $0 \leq \epsilon \leq R$ the bound in (\ref{eq:combinedBd}) is a linear function of $\epsilon$. Therefore, the difference between the bound and the code's true equivocation rate will continue to grow along this interval. Along the interval $R \leq \epsilon \leq 1$ the bound in (\ref{eq:combinedBd}) is a horizontal line. As are result, the difference between the bound and the code's true equivocation rate will shrink along this interval. Logically, the largest difference between the bound and the code's true equivocation must occur at $\epsilon = R$, precisely where the achievability gap is evaluated. The concept of the achievability gap is best understood with the following example.
\begin{example}
\label{ex:example3}
$C_1$ is chosen to be the (7,4) Hamming code with secret information rate $R \approx 0.4286$. Since the blocklength is reasonably small, the equivocation rate can be calculated exactly, as can all the equivocation rate curves for every linear block code with $n = 7$ and $k = 4$. %Using the Monte Carlo simulation technique described earlier, $C_1$ was tested over a range of $\epsilon$ values and its performance 
The results of this experiment can be seen in Fig.~\ref{fig:Hamming_example}. Notice that $\frac{1}{n}H(M|Z^n) \Big|_{\epsilon=0.4286} < R$, as expected. For this code, $A_g \approx 0.0812$ bits. By inspection, it is easy to see that the largest difference between the bound and the true equivocation rate occurs at $\epsilon = R$. Careful inspection of the figure reveals that $A_g$ is actually minimized for (7,4) linear block codes in the choice of $C_1$ as the Hamming code.

\begin{figure}
	\centering
	\includegraphics[width=\columnwidth]{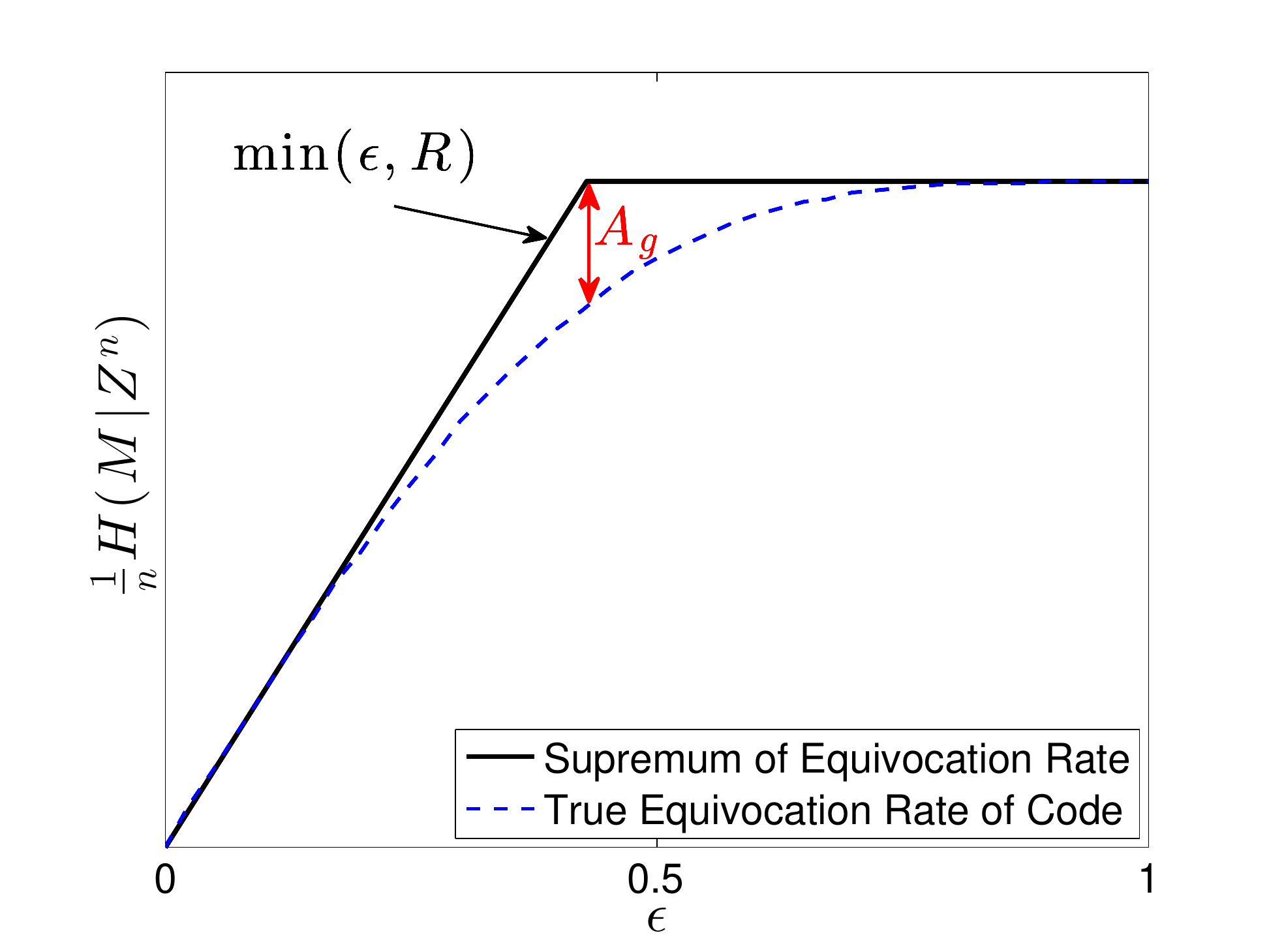}
	\caption{The achievability gap is the difference between the true equivocation rate of a code and it's supremum in the limit evaluated at $\epsilon = R$.}
	\label{fig:Ag}
\end{figure}
\end{example}

\section{Characterizing Algebraic and Random Codes with Small to Medium Blocklengths}
\label{sec:characterizing}
In this section, we present recommendations on how to characterize the security performance of small to medium blocklength codes. Ideally, the true equivocation rate of a code should be calculated through (\ref{eq:equivocation}). From a computational resource standpoint, this is only feasible for codes with blocklengths less than 10. For slightly larger codes, the logical next step would be to place bounds on the true equivocation rate. Steps toward bounding the true equivocation rate are presenented in~\ref{Wong2011}. We argue that for codes with blocklengths larger that 10, performing a Monte Carlo simulation described in Section \ref{sec:MonteCarlo} is a valid method to estimate the security performance of a code. However, due to the rank calculation in (\ref{eq:proof1}), the Monte Carlo simulation method is computationally expensive and is not feasible for codes with large blocklengths. Finally, we believe calculating the achievability gap for codes with large blocklengths gives some insight into the security performance of a code since it measures the maximum difference between the code's theoretical maximum equivocation rate and its estimated equivocation rate. We explore some of these ideas in the following subsections.

\subsection{Calculating True Equivocation for Small Blocklengths}
Example \ref{ex:example3} from the last section may cause us to wonder whether Hamming codes are, in fact, the best possible secrecy codes for their size parameters. By directly calculting the equivocation rate (\ref{eq:equivocation}), we have observed that Hamming and simplex codes are the best performing codes for their respective information rates. Figure~\ref{fig:Hamming_example} shows the equivocation rate curves for every (7,4) linear block code in a coset coding scheme, while Fig.~\ref{fig:Simplex_example} shows the curves for every (7,3) linear block code. We note that the Hamming code wins among the (7,4) codes, and its dual, the simplex code, wins among the (7,3) codes for every value of $\epsilon$. We also note in both figures that some codes perform better than their counterparts at larger values of $\epsilon$ but perform worse than their counterparts at smaller values of $\epsilon$ and vice versa. This makes it difficult to rank the codes in relation to one another (with the exception of the Hamming and simplex codes). Noting that these algebraic structures are quite interesting in a secrecy coding context, in the next section we investigate larger Hamming and simplex codes, and compare their equivocation rate curves and achievability gaps to those of randomly generated codes. 
\begin{figure}
	\centering
	\includegraphics[width=\columnwidth]{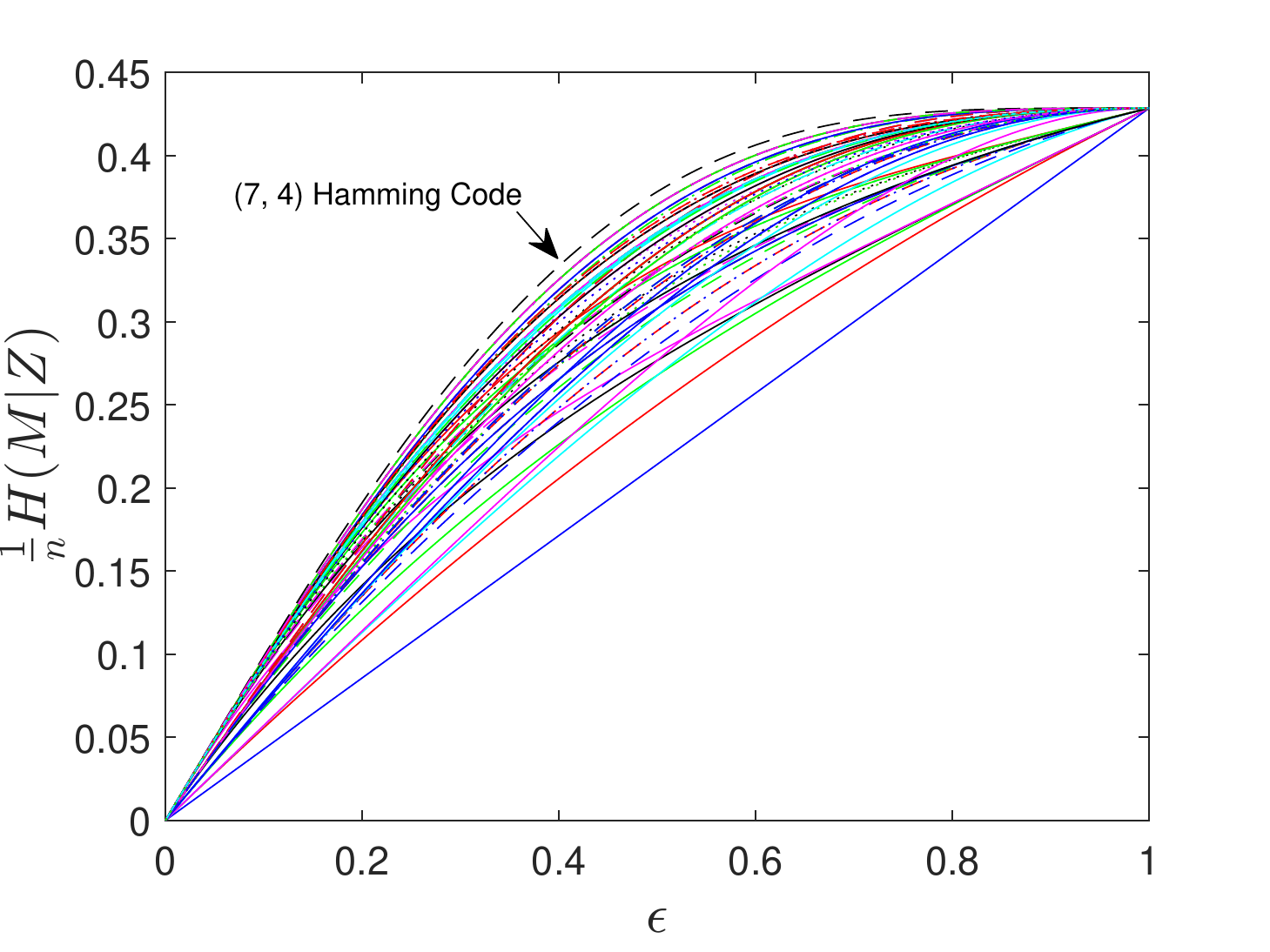}
	\caption{Equivocation rate curves for all (7,4) linear block codes. The Hamming code achieves the highest equivocation for all codes of this size.}
	\label{fig:Hamming_example}
\end{figure}
\begin{figure}
	\centering
	\includegraphics[width=\columnwidth]{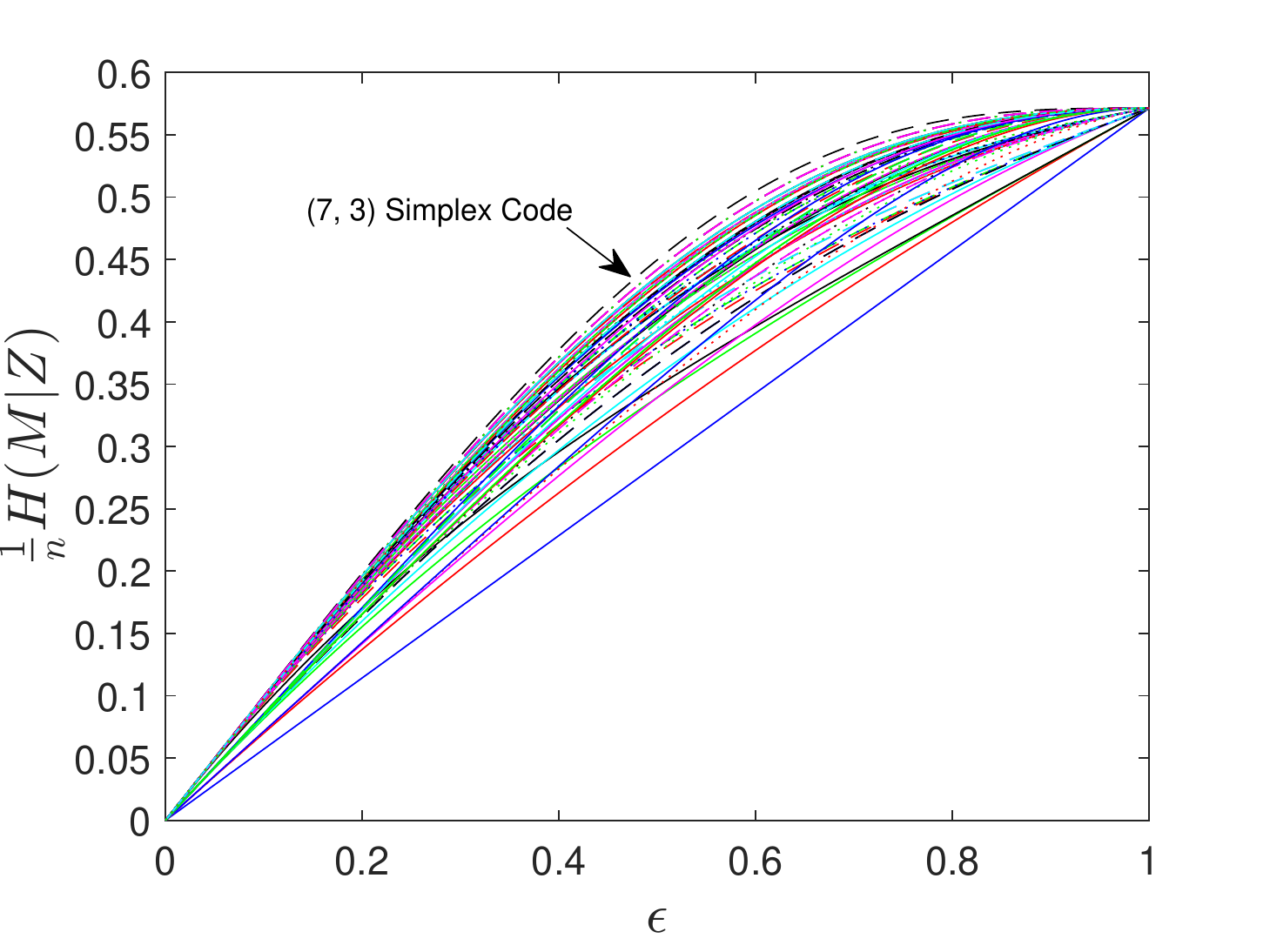}
	\caption{Equivocation rate curves for all (7,3) linear block codes. The simplex code achieves the highest equivocation for all codes of this size.}
	\label{fig:Simplex_example}
\end{figure}

\subsection{Estimating Equivocation Rates for Small to Medium Blocklengths}
Using the Monte Carlo simulation technique described earlier, experimental values of $A_g$ for Hamming and simplex codes with larger blocklengths were obtained and are given in Tables \ref{tab:table2} and \ref{tab:table3}. Here we note a general trend that the achievability gap $A_g$ shrinks as blocklength grows. This makes sense, because $A_g$ measures the difference between a code's equivocation rate and the supremum of achievable equivocation rates, which is to be understood in the limit as $n\rightarrow\infty$. Thus, larger codes should do better in general.
%\subsection{Achievability Gap for Hamming and simplex Codes}
%Experimental values of $A_g$ for Hamming and simplex codes with different blocklengths are shown in Tables \ref{tab:table2} and \ref{tab:table3} respectively. 
\begin{table}
\caption{Achievability gaps for hamming codes.}
\begin{center}
\begin{tabular}{ | c | c | c | c |} 
\hline
Blocklength & $R$ & $A_g$ (bits)\\
\hline
7 & 0.4286 & 0.0812\\
\hline
15 & 0.2667 & 0.0723\\
\hline
31 & 0.1613 & 0.0311\\
\hline
63 & 0.0952 & 0.0181\\
\hline
\end{tabular}
\label{tab:table2}
\end{center}
\end{table}
\begin{table}
\caption{Achievability gaps for simplex codes.}
\begin{center}
\begin{tabular}{ | c | c | c |} 
\hline
Blocklength & $R$ & $A_g$ (bits)\\
\hline
7 & 0.5714 & 0.0779\\
\hline
15 & 0.7333 & 0.0526\\
\hline
31 & 0.8387 & 0.0305\\
\hline
63 & 0.9048& 0.0179\\
\hline 
\end{tabular}
\label{tab:table3}
\end{center}
\end{table}
The full equivocation rate curves for the codes from each of these tables are given for both Hamming and simplex codes in Figs. \ref{fig:hammingEquivocations} and \ref{fig:simplexEquivocations}, respectively.
\begin{figure}
	\centering
	\includegraphics[width=\columnwidth]{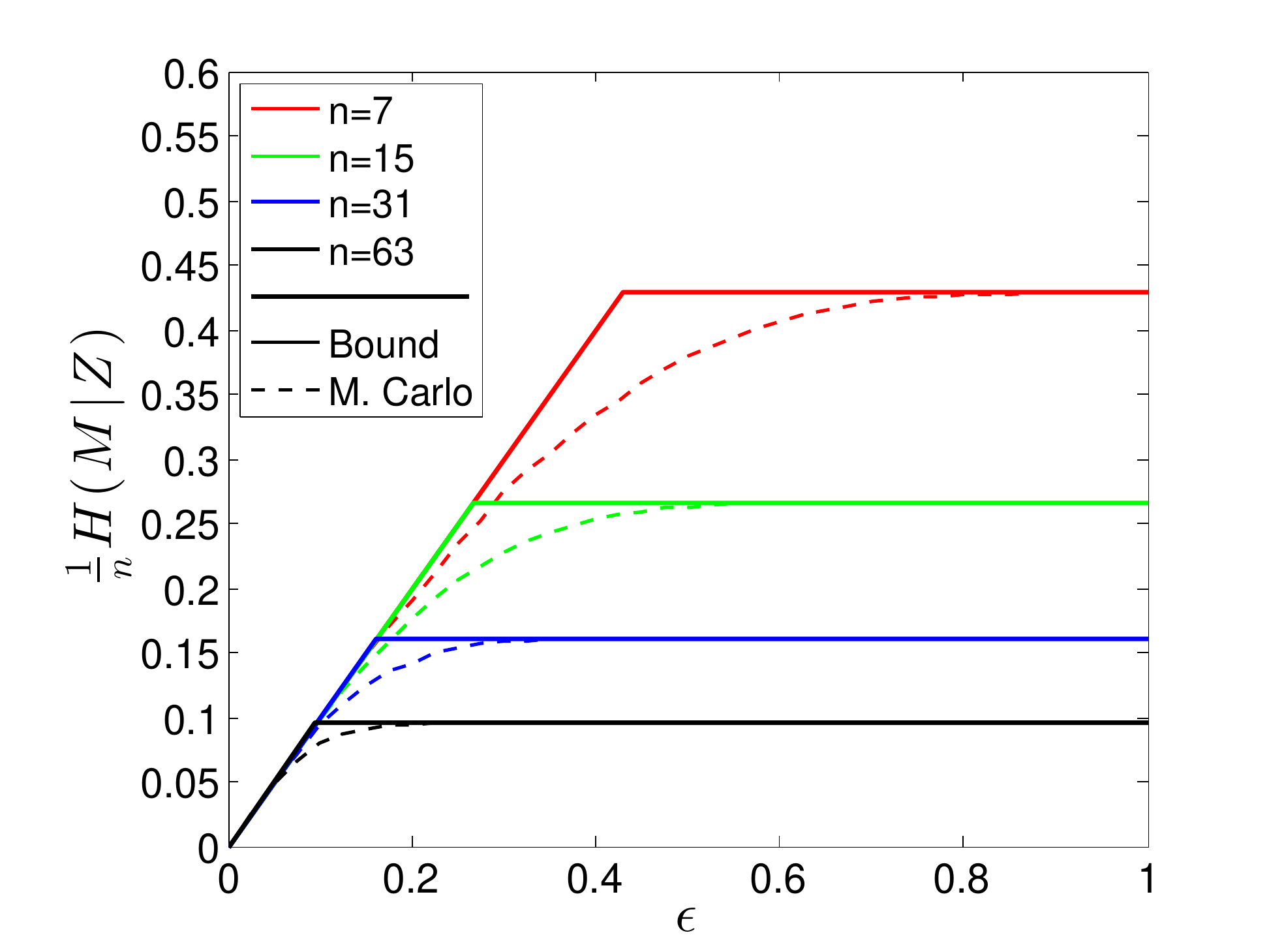}
	\caption{Equivocation rate curves for Hamming codes of four different blocklengths with upper bounds for each.}
	\label{fig:hammingEquivocations}
\end{figure}
\begin{figure}
	\centering
	\includegraphics[width=\columnwidth]{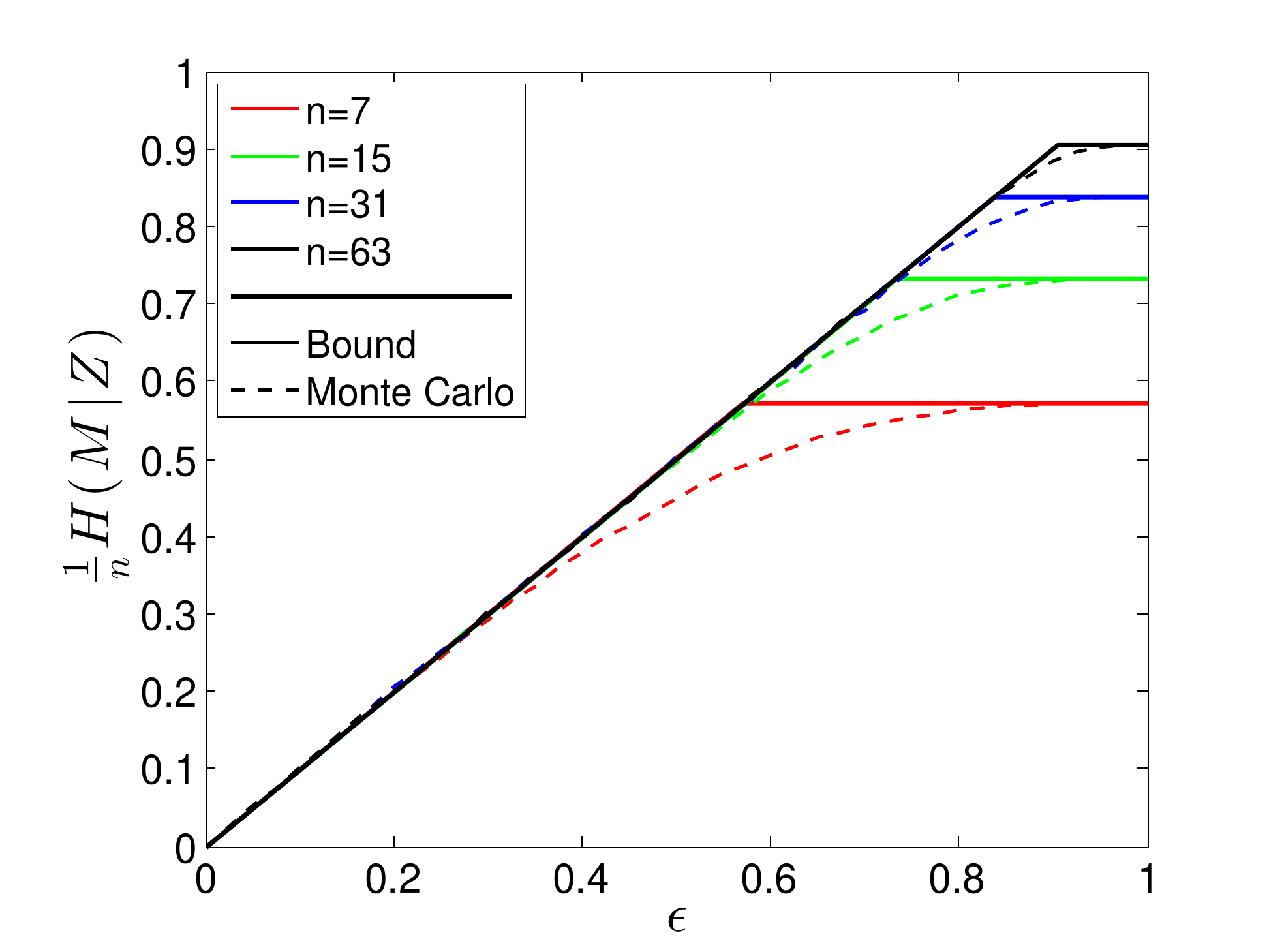}
	\caption{Equivocation rate curves for simplex codes of four different blocklengths with upper bounds for each.}
	\label{fig:simplexEquivocations}
\end{figure}

%\subsection{Usefulness of Random Codes}
We now increase the blocklength and generate codes randomly so as to compare with these highly structured algebraic codes. %Experimental results have shown that Hamming and simplex codes are the best performing codes for their respective information rates for small blocklengths. However, simulations have shown that when \emph{random codes} are used as the base code, they perform similarly to Hamming and simplex codes at larger blocklengths.
%\begin{definition}
The random codes that we consider have a single parameter $\alpha$, and generators for these codes are constructed such that each bit in the generator matrix is equal to one with probability $\alpha$, independent from all other bits. %A \emph{random code} with parameter $\alpha$ will be a code with generator matrix $G$, where if an entry of $G$ is chosen at random, the entry will contain a 1 with probability $\alpha$. These codes do not have to be algebraic in nature.
%\end{definition}
%\subsection{Finding Blocklength for Useful Random Codes}
For blocklengths slightly larger than 10, we are no longer capable of calculating equivocation exactly in any reasonable amount of time. Thus, we employ the Monte Carlo techniques developed herein, and find that simulations show random codes with $\alpha \approx 0.5$, tend to have smaller $A_g$ values. Simulations also show that the (31,26) Hamming code slightly outperforms (31,26) random codes with $\alpha \approx 0.5$. To test this, ten (31,26) random codes with $\alpha \approx 0.5$ were created and tested using the Monte Carlo simulation techniques. The average security performance of these random codes compared to the (31,26) Hamming code's performance is shown in Fig. \ref{fig:randomHamming} with 95\% confidence intervals.
\begin{figure}
	\centering	 
	\includegraphics[width=\columnwidth]{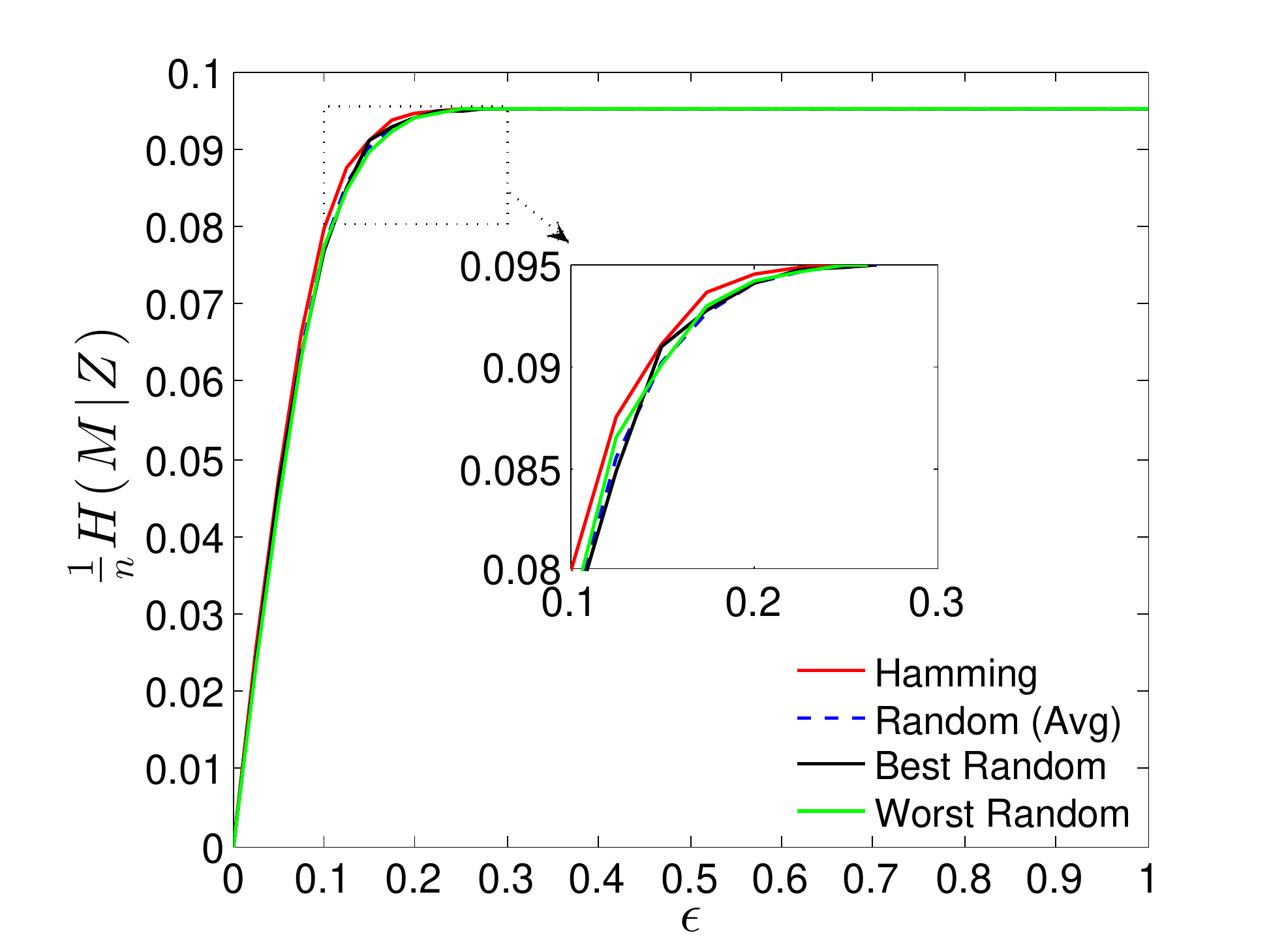}
	\caption{Equivocation rate curves for (63,57) random codes (average case, best case, and worst case) and the (63,57) Hamming code.}
	\label{fig:randomHamming}
\end{figure}
Simulations further indicate that the (31,5) simplex code outperforms (31,5) random codes with $\alpha \approx 0.5$. Again, ten (31,5) random codes with $\alpha \approx 0.5$ were created and tested using the Monte Carlo simulation techniques, and the results can be viewed in Fig.~\ref{fig:randomSimplex}.
\begin{figure}
	\centering
	\includegraphics[width=\columnwidth]{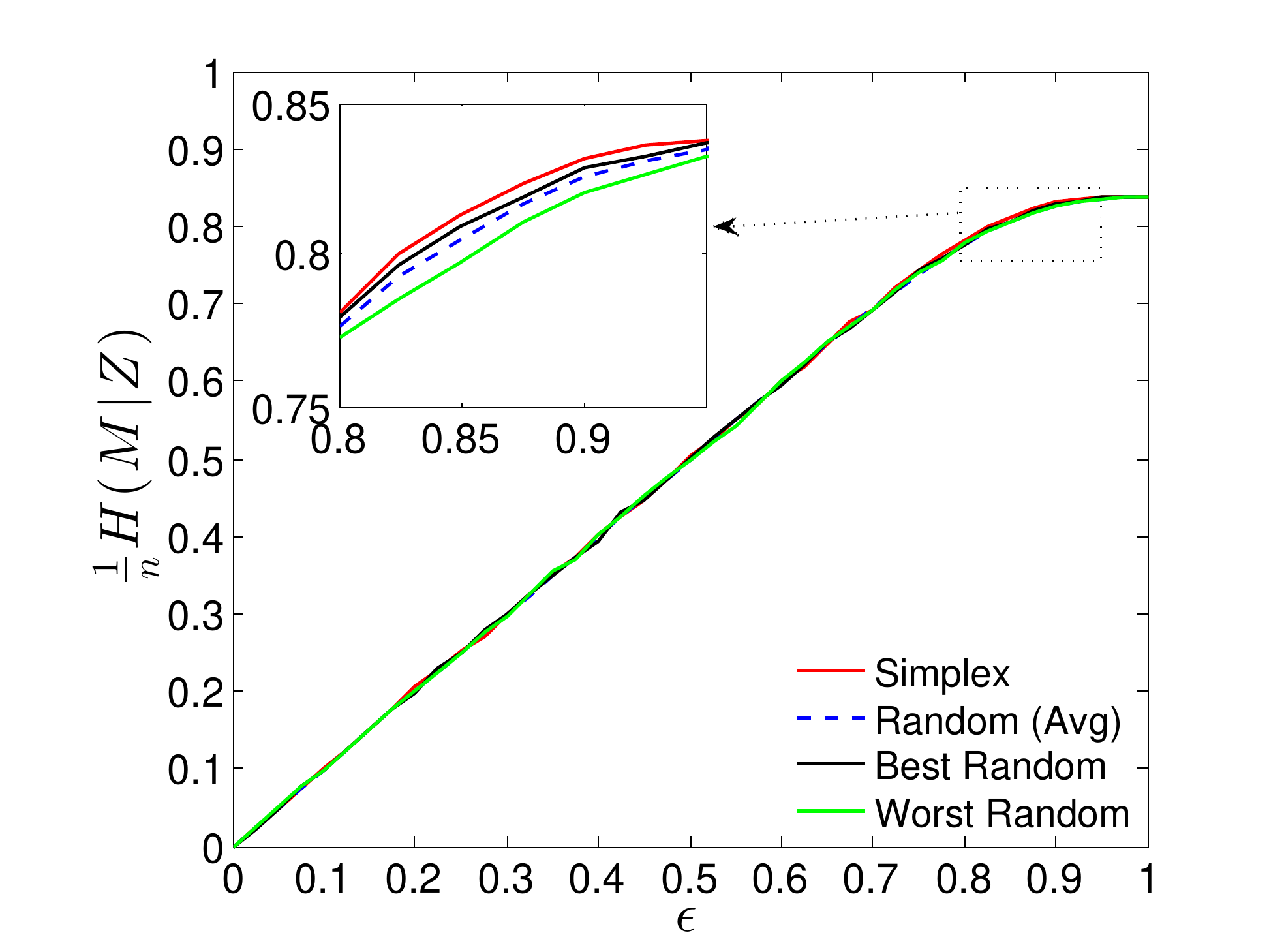}
	\caption{Equivocation rate curves for (31,5) random codes (average case, best case, and worst case) and the (31,5) simplex code.}
	\label{fig:randomSimplex}
\end{figure}
%Notice the random codes perform almost identically to the algebraic codes at moderate blocklengths. These results indicate that we are not limited to only using algebraic codes as the base code if we are looking to achieve a high level of security once we have met a blocklength threshold. This is significant because choosing codes at random is simpler than building algebraic codes that must meet rigid constraints.
We expect the difference between algebraic codes and randomly chosen codes to further shrink for yet larger blocklengths, indicating that codes generated somewhat randomly may be expected to perform within some small difference to more optimized structures.

\section{Conclusion}
\label{sec:conclusion}
In conclusion, we have presented the idea of simulating equivocation rate curves using Monte Carlo techniques for secrecy code performance comparison in the finite blocklength regime, just as is commonly used to compare varying codes and code ensembles for general error-control codes. We have likewise presented a new parameter called the achievability gap that compares the equivocation rate curve to the optimal equivocation rate only achievable in the asymptotic blocklength regime. Small achievability gaps are preferable to larger ones in real secrecy code designs, and we presented some results for small blocklengths that indicated Hamming and simplex codes may have optimal structures for secrecy. However, as blocklength increased to even moderate sizes, the differences between these codes and randomly generated ones was small. However, as the \ac{IoT} gradually requires us to develop new, lightweight, and optimal security algorithms for small packet sizes, finding best possible codes may still be valuable.%, we have demonstrated that coset coding can be used to achieve information-theoretic security. Properties of cosets and linear codes were used to quantify the amount of security given an eavesdropper's specific observation across a BEC. We have also shown that using a Monte Carlo simulation is an effective way to characterize the general security performance of a code. The concept of the Achievability Gap was presented as a means to quantify the difference between the theoretical amount of equivocation and the experimental level of equivocation. Using Monte Carlo simulations, it was experimentally determined that algebraic codes (specifically Hamming and simplex codes) have the best overall security performance if the blocklength is kept small. However, as blocklength increases, random codes tend to have similar security performances as algebraic codes.

\bibliographystyle{ieeetran}
\bibliography{references}

% Generated by IEEEtran.bst, version: 1.13 (2008/09/30)
\begin{thebibliography}{10}
\providecommand{\url}[1]{#1}
\csname url@samestyle\endcsname
\providecommand{\newblock}{\relax}
\providecommand{\bibinfo}[2]{#2}
\providecommand{\BIBentrySTDinterwordspacing}{\spaceskip=0pt\relax}
\providecommand{\BIBentryALTinterwordstretchfactor}{4}
\providecommand{\BIBentryALTinterwordspacing}{\spaceskip=\fontdimen2\font plus
\BIBentryALTinterwordstretchfactor\fontdimen3\font minus
  \fontdimen4\font\relax}
\providecommand{\BIBforeignlanguage}[2]{{%
\expandafter\ifx\csname l@#1\endcsname\relax
\typeout{** WARNING: IEEEtran.bst: No hyphenation pattern has been}%
\typeout{** loaded for the language `#1'. Using the pattern for}%
\typeout{** the default language instead.}%
\else
\language=\csname l@#1\endcsname
\fi
#2}}
\providecommand{\BIBdecl}{\relax}
\BIBdecl

\bibitem{Durisi2016}
G.~Durisi, T.~Koch, and P.~Popovski, ``Toward massive, ultrareliable, and
  low-latency wireless communication with short packets,'' \emph{Proceedings of
  the IEEE}, vol. 104, no.~9, pp. 1711--1726, Sept 2016.

\bibitem{Harrison2013}
W.~K. Harrison, J.~Almeida, M.~R. Bloch, S.~W. McLaughlin, and J.~Barros,
  ``Coding for secrecy: {An} overview of error-control coding techniques for
  physical-layer security,'' \emph{IEEE Signal Processing Magazine}, vol.~30,
  no.~5, pp. 41--50, Sep. 2013.

\bibitem{Bloch2015}
M.~Bloch, M.~Hayashi, and A.~Thangaraj, ``Error-control coding for
  physical-layer secrecy,'' \emph{Proceedings of the IEEE}, vol. 103, no.~10,
  pp. 1725--1746, Oct. 2015.

\bibitem{Wyner1975}
A.~D. Wyner, ``The wire-tap channel,'' \emph{{B}ell {S}yst. {T}ech. {J}.},
  vol.~54, no.~8, pp. 1355--1387, Oct. 1975.

\bibitem{BlochBook}
M.~Bloch and J.~Barros, \emph{Physical-Layer Security: {From} Information
  Theory to Security Engineering}.\hskip 1em plus 0.5em minus 0.4em\relax
  Cambridge, UK: Cambridge University Press, 2011.

\bibitem{Maurer1994}
\BIBentryALTinterwordspacing
U.~M. Maurer, \emph{The Strong Secret Key Rate of Discrete Random
  Triples}.\hskip 1em plus 0.5em minus 0.4em\relax Boston, MA: Springer US,
  1994, pp. 271--285. [Online]. Available:
  \url{http://dx.doi.org/10.1007/978-1-4615-2694-0\_27}
\BIBentrySTDinterwordspacing

\bibitem{Maurer2000}
U.~Maurer and S.~Wolf, ``Information-theoretic key agreement: From weak to
  strong secrecy for free,'' in \emph{Advances in Cryptology --- EUROCRYPT
  2000}, ser. Lecture Notes in Computer Science, B.~Preneel, Ed., vol.
  1807.\hskip 1em plus 0.5em minus 0.4em\relax Springer-Verlag, May 2000, pp.
  351--368.

\bibitem{Thangaraj2007}
A.~Thangaraj, S.~Dihidar, A.~R. Calderbank, S.~W. McLaughlin, and J.-M.
  Merolla, ``Applications of {LDPC} codes to the wiretap channel,''
  \emph{{IEEE} {T}rans. {I}nf. {T}heory}, vol.~53, no.~8, pp. 2933--2945, Aug.
  2007.

\bibitem{Subramanian2010}
A.~Subramanian, A.~T. Suresh, S.~Raj, A.~Thangaraj, M.~Bloch, and
  S.~McLaughlin, ``Strong and weak secrecy in wiretap channels,'' in
  \emph{Proc. 6th Int. Symp. Turbo Codes Iterative Information Processing
  (ISTC)}, Sep. 2010, pp. 30--34.

\bibitem{ModernCodingTheory}
T.~Richardson and R.~Urbanke, \emph{Modern Coding Theory}.\hskip 1em plus 0.5em
  minus 0.4em\relax New York, NY: Cambridge University Press, 2008.

\bibitem{Ozarow1984}
L.~H. Ozarow and A.~D. Wyner, ``Wire-tap channel {II},'' \emph{{B}ell {S}yst.
  {T}ech. {J}.}, vol.~63, no.~10, pp. 2135--2157, Dec. 1984.

\bibitem{MoonCoding}
T.~K. Moon, \emph{Error Correction Coding: {Mathematical} Methods and
  Algorithms}.\hskip 1em plus 0.5em minus 0.4em\relax Hoboken, New Jersey: John
  Wiley \& Sons, Inc., 2005.

\bibitem{CoverAndThomas}
T.~M. Cover and J.~A. Thomas, \emph{Elements of Information Theory}.\hskip 1em
  plus 0.5em minus 0.4em\relax Hoboken, NJ: John Wiley \& Sons, Inc., 2006.

\bibitem{Wick2013}
I.~L. Anuradha~Wickramasooriya and R.~Subramanian, ``Comparison of equivocation
  rate of finite-length codes for the wiretap channel,'' in \emph{IEEE
  International Conference on Services Computing (SCC)}, Munich, Germany, Jan.
  2013, pp. 1--6.

\bibitem{LinAndCostello}
S.~Lin and D.~J.~C. Jr., \emph{Error Control Coding}, 2nd~ed.\hskip 1em plus
  0.5em minus 0.4em\relax Upper Saddle River, New Jersey: Pearson Prentice
  Hall, 2004.

\bibitem{Wong2011b}
C.~W. Wong, T.~F. Wong, and J.~M. Shea, ``{LDPC} code design for the
  {BPSK}-constrained {Gaussian} wiretap channel,'' in \emph{Proc. IEEE Global
  Telecommunications Conf. (GLOBECOM) Workshops}, Dec 2011, pp. 898--902.

\end{thebibliography}

\end{document}